\newtheorem{theorem}{Theorem}
\newtheorem{observation}[theorem]{Observation}
\DeclareMathOperator{\tr}{tr}
\DeclareMathOperator{\Tr}{Tr}
\DeclareMathOperator{\wt}{wt}
\DeclareMathOperator{\supp}{supp}
\DeclareMathOperator{\SWAP}{SWAP}
\newcommand{\ot}[0]{\otimes}
\newcommand{\nn}[0]{\nonumber}
\newcommand{\nhf}[0]{\lfloor \frac{n}{2} \rfloor}
\newcommand{\mc}[1]{\mathcal{#1}}
\renewcommand{\AA}[0]{\mathcal{A}}
\newcommand{\BB}[0]{\mathcal{B}}
\newcommand{\EE}[0]{\mathcal{E}}
\newcommand{\QQ}[0]{\mathcal{Q}}
\newcommand{\CC}[0]{\mathcal{C}}
\newcommand{\N}{\mathds{N}}
\newcommand{\C}{\mathds{C}}
\newcommand{\one}[0]{\mathds{1}}
\newcommand{\bra}[1]{\langle #1|}
\newcommand{\ket}[1]{|#1\rangle}
\newcommand{\ketbra}[2]{| #1\rangle \langle #2|}
\newcommand{\dyad}[1]{| #1\rangle \langle #1|}  
\renewcommand{\a}{\alpha}
\renewcommand{\b}{\beta}
\renewcommand{\r}{\varrho}
\begin{document}

\title			{Bounds on absolutely maximally entangled states from shadow inequalities, \\ and the quantum MacWilliams identity}
\date		{\today}

\author		{Felix Huber}
\affiliation	{Naturwissenschaftlich-Technische Fakult\"at, Universit\"at Siegen, D-57068 Siegen, Germany}

\author	 	{Christopher Eltschka}
\affiliation	{Institut für Theoretische Physik, Universit\"at Regensburg, D-93040 Regensburg, Germany}

\author		{Jens Siewert}        
\affiliation	{Departamento de Qu\'{i}mica F\'{i}sica, Universidad del Pa\'{i}s Vasco UPV/EHU, E-48080 Bilbao, Spain}
\affiliation	{IKERBASQUE Basque Foundation for Science, E-48013 Bilbao, Spain}

\author		{Otfried G\"uhne}
\affiliation	{Naturwissenschaftlich-Technische Fakult\"at, Universit\"at Siegen, D-57068 Siegen, Germany}

\begin{abstract}
  A pure multipartite quantum state is called absolutely maximally entangled (AME), 
  if all reductions obtained by tracing out at least half of its parties
  are maximally mixed. Maximal entanglement is then present across every bipartition.
  The existence of such states is in many cases unclear. 
  With the help of the weight enumerator machinery known from quantum
  error correction and the generalized shadow inequalities, we obtain new bounds
  on the existence of AME states in dimensions larger than two. To complete the treatment on the 
  weight enumerator machinery, the quantum MacWilliams identity is derived in the Bloch representation.
  Finally, we consider AME states whose subsystems have different local dimensions, 
  and present an example for a \(2\times3\times3\times3\) system that shows 
  maximal entanglement across every bipartition.
\end{abstract}

\maketitle


\section{Introduction}

\setlength\intextsep{0pt}
\setlength{\columnsep}{1pt}%
\begin{wrapfigure}{l}{0.07\textwidth}
    \includegraphics[width=0.07\textwidth]{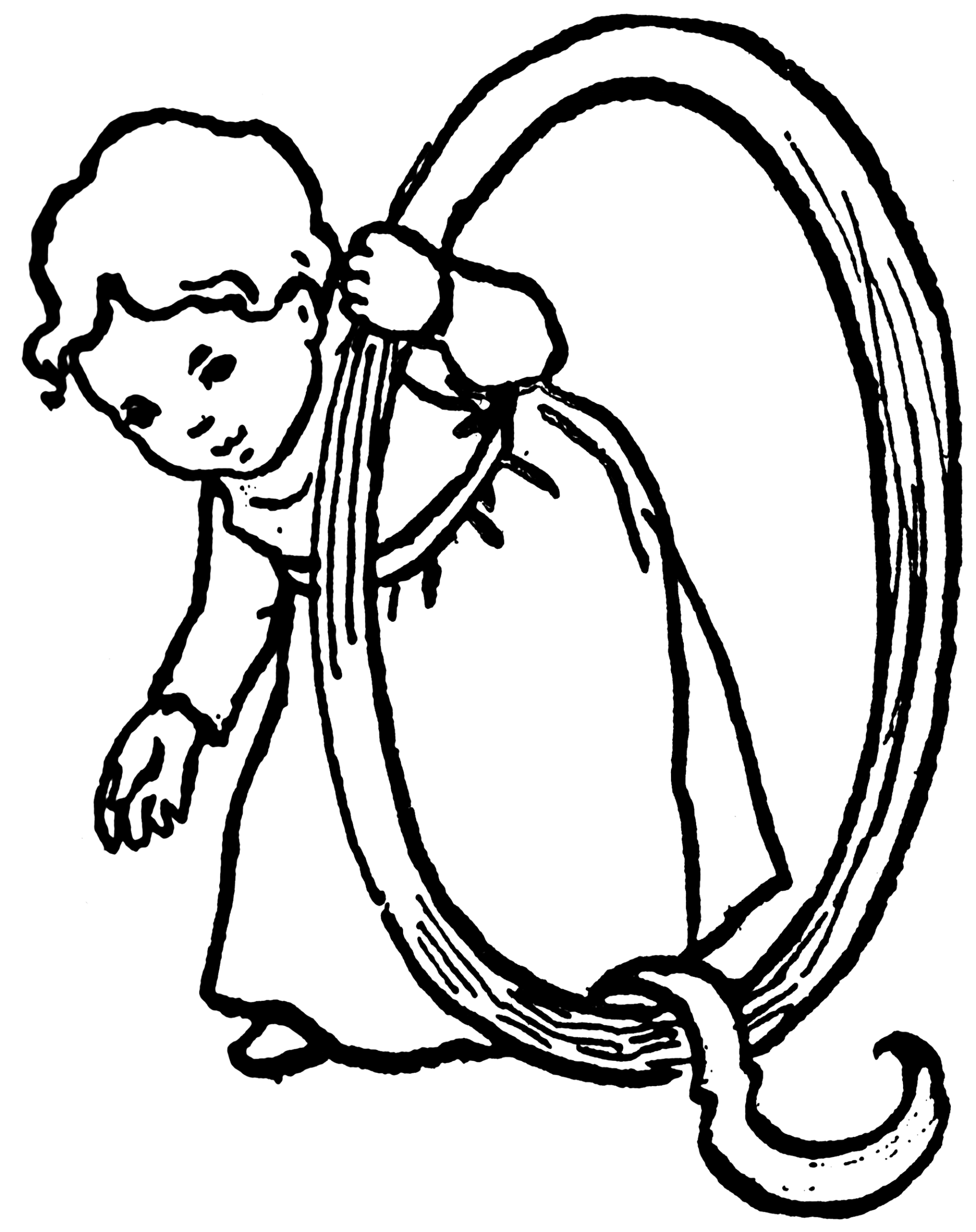}
\end{wrapfigure}
\setlength\intextsep{0pt}
\noindent 
uantum states of many particles show interesting non-classical features, 
foremost the one of entanglement.
A pure state of \(n\) parties is called {\em absolutely maximally entangled} (AME), 
if all reductions to \(\nhf\) parties are maximally mixed. 
Here, \(\lfloor \cdot \rfloor\) is the floor function. 
Then maximal possible entanglement is present across each bipartition. 
Well-known examples are the Bell and GHZ states on two and three parties respectively.
AME states have been shown to be a resource for a variety of quantum information-theoretic tasks 
that require maximal entanglement amongst many parties, such as open-destination teleportation, 
entanglement swapping, and quantum secret sharing~\cite{Helwig2012, Helwig2013}. 
They also represent building blocks for holographic quantum error-correcting codes, 
and are often called perfect tensors in this context~\cite{Latorre2015, Pastawski2015, Li2017}.
Thus, it is a natural question to ask for what number of parties and local dimensions such states 
may exist~\cite{Gisin1998, Higuchi2000, Scott2004}.

The existence of AME states composed of two-level systems was recently solved:
Qubit AME states do only exist for $n=2,3,5,$ and $6$ parties, all of which can be 
expressed as graph or stabilizer states~\cite{Scott2004, Huber2017}. 
Concerning larger local dimensions however, the existence of such states is only partially resolved.
AME states exist for any number of parties, if the dimension of the subsystems is chosen large enough~\cite{Helwig2013}.
Furthermore, different constructions for such states have been put forward, based on graph states~\cite{Hein2006, Helwig_graph}, 
classical maximum distance separable codes~\cite{Grassl2004, Helwig2013}, 
and combinatorial designs~\cite{Goyeneche2015, Goyeneche2017}. 
However, for many cases it is still unknown whether or not AME states exist~\footnote{
For the current status of this question, see Problem~\(35\) in the list of Open
  Quantum Problems, IQOQI Vienna (November 2017),
  \href{http://oqp.iqoqi.univie.ac.at/existence-of-absolutely-maximally-entangled-pure-states}
  {http://oqp.iqoqi.univie.ac.at/existence-of-absolutely-maximally-entangled-pure-states}.}.

In this article, we give results on the question of AME state existence 
when the local dimension is three or higher. Namely, we show that, additionally
to the known non-existence bounds, 
three-level AME states of \(n = 8,12,13,14,16,17,19,21,23\),
four-level AME states of \(n = 12,16,20,24,25,26,28,29,30,33,37,39\), and 
five-level AME states of \(n = 28,32,36,40,44,48\) parties~do~not~exist.

To this end, we make use of the weight enumerator machinery known 
from quantum error correcting codes (QECC). With it, bounds can also be obtained
for one-dimensional codes, which are pure quantum states~\cite{Scott2004}.
We will make use of the so-called shadow inequalities, which constrain the admissible correlations of multipartite states, 
to exclude the existence of the above-mentioned AME states. 
Along the way, we will prove a central theorem, the quantum MacWilliams identity, 
originally derived by Shor and Laflamme for qubits~\cite{Shor1997} 
and by Rains for arbitrary finite-dimensional systems in Ref.~\cite{Rains1998}.
Thus our aim is twofold:
On the one hand, we provide an accessible introduction into the weight enumerator machinery 
in terms of the Bloch representation, in order to gain physical intuition. 
On the other hand, we apply this machinery to exclude the existence of certain higher-dimensional AME states
by making use of the shadow inequalities.

This article is organized as follows. In the next section, we introduce the shadow inequalities,
from which we eventually obtain the bounds mentioned above. 
In Sec.~\ref{sec:bloch}, the Bloch representation of quantum states is introduced, 
followed by a short discussion of QECC
and their relation to AME states in Sec.~\ref{sec:qecc}. 
In Sec.~\ref{sec:shadow_enum}, we introduce the shadow enumerator,
the Shor-Laflamme enumerators are explained in Sec.~\ref{sec:weight_enum},
followed by the derivation of the quantum MacWilliams identity in Sec.~\ref{sec:MacWilliams}.
The shadow enumerator in terms of the Shor-Laflamme enumerator is derived in Sec.~\ref{sec:deriv_shadow_enum}, 
from which one can obtain bounds on the existence of QECC and of AME states in particular, 
which is presented in Sec.~\ref{sec:AME}. 
After considering AME states in mixed dimensions in Sec.~\ref{sec:AME_mix}, we conclude in Sec.~\ref{sec:conclusion}.


\section{Motivation}\label{sec:motivation}
Originally introduced by Shor and Laflamme~\cite{Shor1997}, 
Rains established the notion of weight enumerators in a series of landmark articles 
on quantum error correcting codes~\cite{Rains1998, Rains1999, Rains2000}. With it, he stated some of the 
strongest bounds known to date on the existence of QECC~\cite{Rains1999}.

In particular, in his paper on polynomial invariants of quantum codes~\cite{Rains2000}, 
Rains showed an interesting theorem, which proved to be crucial to obtain those bounds.
These are the so-called {\em generalized shadow inequalities}:
For all positive semi-definite Hermitian operators  \(M\) and \(N\) on parties 
\((1 \dots  n)\) and any fixed subset \(T \subseteq \{1 \dots n\}\), it holds that
\begin{equation}\label{eq:shadow_ineq}
	\sum_{S \subseteq \{1 \dots n\}} (-1)^{|S\cap T|} \Tr_S[\Tr_{S^c}(M) \Tr_{S^c}(N)] \geq 0 \,.
\end{equation}
Here and in what follows, \(S^c\) denotes the complement of subsystem \(S\) in \(\{1 \dots n\}\),
and the sum is performed over all possible subsets \(S\).
Note that if $M=N=\rho$ is a quantum state, the generalized shadow inequalities 
are consistency equations involving the purities of the marginals, 
i.e. they relate terms of the form \(\Tr[\Tr_{S^c}(\rho)^2]\), which in turn 
can be expressed in terms of linear entropies. 
Thus, these inequalities form an exponentially large set of monogamy relations for multipartite quantum states, 
applicable to any number of parties and local dimensions.

To state bounds on the existence of AME states of \(n\) parties having
local dimension \(D\) each, one could in principle just evaluate this expression by 
inserting the purities of AME state reductions.
However, in order to understand the connections to methods from quantum error correcting codes,
let us first recall the quantum weight enumerator machinery, 
including the so-called shadow enumerator, which is derived from Eq.~\eqref{eq:shadow_ineq}.
We will then rederive the central theorem, namely the quantum MacWilliams identity. 
Finally, we obtain new bounds for AME states with the help of the shadow inequalities.
In order to remain in a language close to physics, we will work exclusively in the Bloch representation.


\section{The Bloch representation}\label{sec:bloch}
Let us introduce the Bloch representation.
Denote by \(\{e_j\}\) an orthonormal basis for operators acting on \(\C^{D}\), 
such that \(\Tr(e_j^\dag e_k) = \delta_{jk} D\).
We require that \(\{e_j\}\) contains the identity (e.g. \(e_0 = \one\)),
and therefore all other basis elements are traceless (but not necessarily Hermitian).
Then, a local error-basis \(\mc{E}\) acting on \((\C^D)^{\ot n}\) can be formed by 
taking tensor products of elements in \(\{e_j\}\).
That is, each element \(E_\alpha \in\mc{E}\) can be written as
\begin{equation}
	E_\alpha = e_{\alpha_1} \ot \dots \ot e_{\alpha_n} \,.
\end{equation}
Because the single-party basis \(\{e_j\}\) is orthonormal, the relation 
\(\Tr(E_\alpha^\dag E_\beta) = \delta_{\alpha\beta} D^n\) follows.
For qubits, \(\mc{E}\) can be thought of to contain all
tensor products which can be built from the identity and the Pauli matrices; 
in higher dimensions, a tensor-product basis can be formed from elements of 
the Heisenberg-Weyl or the generalized Gell-Mann basis~\cite{Bertlmann2008}.
Further, denote by \(\supp(E)\) the support of operator \(E\), that is, 
the set of parties on which \(E\) acts non-trivially.
The weight of an operator is then size of its support, 
and we write \(\wt(E) = |\supp(E)|\).

Having defined a local error-basis \(\mc{E}\) acting on \((\C^D)^{\ot n}\), 
every operator on \(n\) systems that have \(D\) levels each
can in the Bloch representation be decomposed as
\begin{align}\label{eq:op_expansion}
	M &= \frac{1}{D^{n}} \sum_{E\in \EE}  \Tr(E^\dag M ) E \,.
\end{align}
As in the above decomposition, we will often omit the subindex \(\alpha\),
writing \(E\) for \(E_\alpha\). 
Also, most equations that follow contain sums over all elements \(E\) in \(\EE\), 
subject to constraints. In those cases we will often denote the constraints only below the 
summation symbol.

Given an operator \(M\) expanded as in Eq.~\eqref{eq:op_expansion}, 
its reduction onto a given subsystem \(S^c\) tensored by the identity on the complement \(S\) reads
\begin{equation}\label{eq:ptrace_op1}
	\Tr_{S}(M) \ot \one_{S} = D^{|S|-n} \!\!\!\! \sum_{\supp(E) \subseteq S^c} \Tr(E^\dag M) E \,.
\end{equation}
This follows from \( \Tr_{S} (E) = 0\) whenever \(\supp(E) \not \subseteq S^c\).
Interestingly, this can also be written as a quantum channel
whose Kraus operators form a unitary \(1\)-design~\cite{Gross2007}.
\begin{observation}\label{eq:ptrace_channel}
  The partial trace over subsystem \(S\) tensored by the identity on \(S\)
  can also be written as a channel,
  \begin{align}\label{eq:ptrace_op2}
    \Tr_{S}(M) \ot \one_{S} &= D^{-|S|}\sum_{\supp(E) \subseteq S} \!\!\!\! E M E^\dag \,.
  \end{align}
\end{observation}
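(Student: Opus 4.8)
The plan is to show that the right-hand side of Eq.~\eqref{eq:ptrace_op2} defines a channel whose action coincides with the partial trace already expressed in Eq.~\eqref{eq:ptrace_op1}. First I would isolate the operators \(E\) appearing in the sum: since \(\supp(E)\subseteq S\), each factorizes as \(E = E_S \ot \one_{S^c}\), where \(E_S\) runs over the tensor-product error basis on the \(D^{|S|}\)-dimensional subsystem \(S\). Thus the right-hand side is a twirl over the local basis on \(S\) alone, acting trivially on \(S^c\).

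The central ingredient is a single-system twirling identity: for the orthonormal basis \(\{e_j\}\) with \(\Tr(e_j^\dag e_k) = D\delta_{jk}\), one has
\begin{equation}
  \sum_j e_j X e_j^\dag = D\,\Tr(X)\,\one
\end{equation}
for every operator \(X\) on \(\C^D\). I would establish this from the completeness relation of an orthonormal operator basis: writing everything in components, the normalization \(\Tr(e_j^\dag e_k) = D\delta_{jk}\) yields \(\sum_j (e_j)_{ac}\overline{(e_j)_{bd}} = D\,\delta_{ab}\delta_{cd}\), and contracting this against \(X_{cd}\) gives exactly \(D\,\Tr(X)\,\delta_{ab}\). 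Taking tensor products over the parties in \(S\) then upgrades this to \(\sum_{E_S} E_S Y E_S^\dag = D^{|S|}\Tr(Y)\,\one_S\) for any operator \(Y\) on \(S\).

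With the lemma in hand, I would expand \(M\) in the full Bloch basis via Eq.~\eqref{eq:op_expansion} and split each basis element as \(E' = E'_S \ot E'_{S^c}\). Applying the twirl factorizes the inner sum over \(E_S\) into \(\big(\sum_{E_S} E_S E'_S E_S^\dag\big)\ot E'_{S^c} = D^{|S|}\Tr(E'_S)\,\one_S \ot E'_{S^c}\). Because every basis element other than the identity is traceless, \(\Tr(E'_S)\) vanishes unless \(E'_S = \one_S\), so only the terms with \(\supp(E')\subseteq S^c\) survive. Collecting the surviving coefficients together with the prefactor \(D^{-|S|}\) reproduces exactly \(D^{|S|-n}\sum_{\supp(E')\subseteq S^c}\Tr(E'^\dag M)\,E'\), which is the right-hand side of Eq.~\eqref{eq:ptrace_op1}, completing the identification.

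The main obstacle is the twirling identity itself, since the \(e_j\) are permitted to be non-Hermitian and non-unitary; the naive picture of a Pauli twirl built from self-inverse Hermitian operators does not apply, and one must argue purely from orthonormality and completeness rather than from any underlying group structure. Everything else is bookkeeping of supports and an appeal to the traceless property of the non-identity basis elements.
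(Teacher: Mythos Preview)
Your argument is correct, and it reduces to the same core lemma as the paper's proof: the single-system twirling identity \(\sum_j e_j X e_j^\dag = D\,\Tr(X)\,\one\). The difference lies in how this lemma is established. The paper uses the SWAP operator: writing \(\SWAP = D^{-1}\sum_j e_j^\dag \ot e_j\), one has \(\one \ot N = \SWAP\,(N \ot \one)\,\SWAP\); expanding and tracing the second factor yields the identity directly. You instead read it off from the completeness relation \(\sum_j (e_j)_{ac}\overline{(e_j)_{bd}} = D\,\delta_{ab}\delta_{cd}\) in components. Both are valid and neither requires Hermiticity or unitarity of the \(e_j\); the SWAP route is more structural and ties in with the swap-trick used elsewhere in entanglement theory, while your argument is fully elementary and self-contained. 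Downstream, the paper simply invokes ``linearity of the tensor product'' to pass from one factor to all of \(S\), whereas you make this explicit by expanding \(M\) in the Bloch basis and matching against Eq.~\eqref{eq:ptrace_op1}; your bookkeeping is slightly more detailed but lands in exactly the same place.
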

The proof can be found in Appendix~A.


\section{Quantum Error Correcting Codes}\label{sec:qecc}
Let us introduce quantum error correcting codes and their relation to 
absolutely maximally entangled states.
A quantum error correcting code with the parameters \(((n, K, d))_D\) 
is a K-dimensional subspace \(\QQ\) of \((\C^D)^{\ot n}\), 
such that for any orthonormal basis 
\(\{\ket{i_\QQ}\}\) of \(\QQ\) and all errors \(E\in\mc{E}\) with \(\wt(E)<d\) 
\cite{Scott2004, Gottesmann2002},
\begin{equation} \label{eq:def_qecc}
	\bra{i_\QQ} E \ket{j_\QQ} = \delta_{ij} C(E)\,.
\end{equation}
Note that the constant \(C(E)\) only depends on the error~\(E\).
Above, \(d\) is called the distance of the code.
If \(C(E) = \Tr(E) / D^n\), the code is called pure. 
By convention, codes with \(K=1\) are only considered codes if they are pure.

From the definition follows that a one-dimensional code (also called self-dual), 
described by a projector \(\dyad{\psi}\), must fulfill \(\Tr(E \dyad{\psi}) = 0\)
for all \(E \neq \one \) of weight smaller than \(d\).
Thus, pure one-dimensional codes
of distance $d$ are pure quantum states whose reductions
onto \((d-1)\) parties are all maximally mixed.
AME states, whose reductions onto \(\nhf\) parties are maximally mixed, 
are QECC having the parameters \(((n,1, \nhf+1))_D\).


\section{The shadow enumerator}\label{sec:shadow_enum}
Let us introduce the shadow enumerator \(S_{MN}(x,y)\), and point out its usefulness.
Following Rains~\cite{Rains1998}, we first define
\begin{align}\label{eq:unitary_enum}
	\AA'_S (M,N) &= \Tr_S     [\Tr_{S^c}(M) \Tr_{S^c}(N)]  \,, \\
	\BB'_S (M,N) &= \Tr_{S^c} [\Tr_{S}(M)   \Tr_{S}(N)]\,.
\end{align}
Naturally, \(\AA'_S = \BB'_{S^c}\).
With this, we define
\begin{equation}\label{eq:shadow_coeffs}
	S_j(M,N) = \sum_{ |T|= j} 
	\sum_{S\subseteq \{1\dots n\}} (-1)^{|S \cap T^c|} \AA'_S(M,N) \,,
\end{equation}
where the sum is over all subsets \(T \subseteq \{1\dots n\}\) of size~\(j\).
Eq.~\eqref{eq:shadow_ineq} states that all $S_j$ must be non-negative.
Note however, that there is the term \(T^c\) instead of \(T\) in the exponent, 
compared to Eq.~\eqref{eq:shadow_ineq}, but this does not matter, as Eq.~\eqref{eq:shadow_ineq} holds for any \(T\).

The {\em shadow enumerator} then is the polynomial
\begin{equation}\label{eq:shadow_enum}
	S_{MN}(x,y) = \sum_{j=0}^n S_j(M,N) \, x^{n-j} y^j\,.
\end{equation}

Given a hypothetical QECC or an AME 
state in particular, its shadow enumerator must have non-negative coefficients. 
If this is not the case, one can infer that such a code or state cannot exist.
However, how do we obtain this enumerator?
Two paths come to mind:
First, if we are interested in a one-dimensional code (\(K=1\)), 
the purities of the reductions determine all \(\AA'_S(\QQ)\).
For AME states of local dimension \(D\), the situation is particularly simple: 
from the Schmidt decomposition, it can be seen that all 
reductions to $k$ parties must have the purity
\begin{equation}\label{eq:AME_purity}
	\Tr(\rho_{(k)}^2) = D^{- \min(k, n-k)}\,.
\end{equation}
Second, the coefficients of the so called Shor-Laflamme enumerator \(A_j(\QQ)\) 
may be known (see also below), from which the shadow enumerator can be obtained.

Generally, when dealing with codes whose existence is unknown,
putative weight enumerators can often be obtained by stating 
the relations that follow as a linear program (see Appendix~F)~\cite{Calderbank1998, Scott2004, Nebe2006}. 
If, for a set of parameters \(((n,K,d))_D\), 
no solution can be found, a corresponding QECC cannot exist.

In the following three sections, we aim to give a concise introduction as well as intuition to 
this enumerator theory.


\section{Shor-Laflamme enumerators} \label{sec:weight_enum}
In this section, we introduce the protagonists of the enumerator machinery, 
the {\em Shor-Laflamme (weight) enumerators}~\cite{Shor1997, Rains1998}. 
These are defined for any two given Hermitian 
operators \(M\) and \(N\) acting on \((\C^D)^{\ot n}\), and are invariants under local unitary operations.
Their (unnormalized) coefficients are given by~\footnote{
For dimensions larger than two, this definition is different, but equivalent, 
to the original definition as found in Ref.~\cite{Rains1998}.}
\begin{align}
	A_j(M,N) &= \sum_{\wt(E) = j} \Tr(E M) \Tr(E^\dag N) \,,\\
	B_j(M,N) &= \sum_{\wt(E) = j} \Tr(E M E^\dag N)        \,.
\end{align}
The corresponding enumerator polynomials are
\begin{align}
	A_{MN}(x,y) &= \sum_{j=0}^n A_j(M,N) x^{n-j} y^j  \label{eq:SL_enum1}\,, \\
	B_{MN}(x,y) &= \sum_{j=0}^n B_j(M,N) x^{n-j} y^j  \label{eq:SL_enum2}\,.
\end{align}
While it might not be obvious from the definition, these enumerators are independent of the local error-basis \(\EE\) chosen,
and are thus local unitary invariants. This follows from the fact that they can expressed as linear combinations
of terms having the form of Eq.~\eqref{eq:unitary_enum}. The exact relation will be made clear in Section~\ref{sec:MacWilliams}.

When dealing with weight enumerators, there is the following pattern, as seen above:
First define a set of coefficients [e.g, \(A_j(M,N)\)], from which
the associated polynomial, the enumerator, is constructed [e.g., \(A_{MN}(x,y)\)].
If \(M=N\), we will often write the first argument only, e.g. \(A_j(M)\), or leave it out alltogether.
In Table~\ref{table:enums}, we give an overview of the coefficients and enumerators
used in this article.

\begin{table*}[!ht]
\begin{tabular}{ l  l  l }
&  { Coefficient } & { Enumerator }\\
\noalign{\vskip 2mm}    
\hline
\noalign{\vskip 2mm}    
\multirow{2}{*}{Shor-Laflamme enum.:\quad\quad}
      & $A_j(M,N)  \,= \sum_{\wt(E) = j} \Tr(E M) \Tr(E^\dag N)$ \quad\quad& $A_{MN}(x,y) = \sum_{j=0}^n A_j(M,N) x^{n-j} y^j$ \\
      & $B_j(M,N)  \,= \sum_{\wt(E) = j} \Tr(E M E^\dag N)$      \quad\quad& $B_{MN}(x,y) = \sum_{j=0}^n B_j(M,N) x^{n-j} y^j$ \\
\noalign{\vskip 2mm}
\hline
\noalign{\vskip 2mm}    
  \multirow{4}{*}{Rain's unitary enum.:\quad\quad}
      &$\AA'_S(M,N)    = \Tr_S     [\Tr_{S^c}(M) \Tr_{S^c}(N)] $ &\\
      &$\BB'_S(M,N) \, = \Tr_{S^c} [\Tr_{S}(M)   \Tr_{S}(N)]   $ &\\
      \noalign{\vskip 1mm}    
      & $A'_j(M,N) \,= \sum_{|S|=j} \AA'_S(M,N)$ \quad\quad& $A'_{MN}(x,y) = \sum_{j=0}^n A'_j(M,N) x^{n-j} y^j$ \\
      & $B'_j(M,N) \,= \sum_{|S|=j} \BB'_S M,N)$   \quad\quad& $B'_{MN}(x,y) = \sum_{j=0}^n B'_j(M,N) x^{n-j} y^j$   \\
\noalign{\vskip 2mm}    
\hline
\noalign{\vskip 2mm}
       Shadow enumerator:\quad\quad       & $S_j(M,N) \,\,= \sum_{ |T|= j} \sum_S (-1)^{|S \cap T^c|} \AA'_S(M,N)$
       \quad\quad& $S_{MN}(x,y) \,= \sum_{j=0}^n S_j(M,N) x^{n-j} y^j$
\end{tabular}
\caption{An overview on the different weight enumerator polynomials and their coefficients, which are local unitary invariants.} 
\label{table:enums}
\end{table*}

Considering a QECC with parameters \(((n, K, d))_D\), 
one sets \(M=N\) to be equal to the projector \(\QQ\) onto the code space.
The following results concerning QECC and their Shor-Laflamme enumerators are known~\cite{Rains1998}:
The coefficients \(A_j = A_j(\QQ)\) and \(B_j = B_j(\QQ)\) are non-negative, and
\begin{align}
	K B_0 &= A_0 = K^2      \label{eq:KB0_A0}\,,\\
	K B_j &\geq A_j  	\label{eq:KBj_Aj}\,,
\end{align}
with equality in the second equation for \(j<d\).
In fact, these conditions are not only necessary but also sufficient 
for a projector \(\QQ\) to be a QECC (see Appendix~B.).
The distance of a code can thus be obtained in the following way:
if a projector \(\QQ\) fulfills the above conditions with equality for all \(j < d\),
then \(\QQ\) is a quantum code of distance \(d\)~\footnote{
See Theorems $2$ and $18$ in Ref.~\cite{Rains1998}, and Ref.~\cite{Shor1997}.}.
%
For pure codes, additionally \(A_j = B_j = 0\) for all \(1 < j < d\).
In particular, AME states have \(A_{j} = 0\) for all \(1< j < \nhf+1\); the
remaining $A_j$ can be obtained in an iterative way from Eq.~\eqref{eq:AME_purity} \cite{Scott2004, Huber2017}.

In the case of $\QQ=\dyad{\psi}$, the weight enumerators have a 
particularly simple interpretation: 
The coefficient $A_j$ measures the contribution to the purity of \(\dyad{\psi}\) by 
terms in \(\dyad{\psi}\) having weight \(j\) only,
while the dual enumerator measures the overlap of $\dyad{\psi}$ with itself, 
given an error-sphere of radius $j$. Furthermore, we have \(A_j = B_j\) for all \(j\), as a direct evaluation shows.

In the entanglement literature, \(A_j(\rho)\) is also called the correlation strength, or the two-norm of the 
\(j\)-body correlation tensor~\cite{Aschauer2004, Kloeckl2015}.
Concerning codes known as stabilizer codes, 
\(A_j\) and \(B_j\) count elements of weight $j$ 
in the stabilizer and in its normalizer respectively~\cite{Gottesmann1997}.

Let us now try to give some intuition for these enumerators for general Hermitian 
operators \(M\) and \(N\).
Note that the coefficients of the primary enumerator $A_j(M,N)$ form a 
decomposition of the inner product \(\Tr(MN)\).
This can be seen by writing \(M\) and \(N\) in the Bloch representation [Eq.~\eqref{eq:op_expansion}],
\begin{align}\label{eq:Tr(MN)}
  \Tr(MN) &= 
	     D^{-2n} \Tr\Big( \sum_{E}      \Tr(E M ) E^\dag
	  \, \sum_{E'}      \Tr(E'^\dag N ) E' \Big)  \nn\\
	  &= D^{-2n} \Tr\Big(\sum_{E}  \Tr(E M ) \Tr(E^\dag N ) \,E^\dag E  \Big) \nn\\
	  &= D^{-n} \sum_{j=0}^n A_j(M,N)  \,. 
\end{align}

On the other hand, the coefficients of the dual enumerator \(B_j(M,N)\) 
can be seen as a decomposition of \(\Tr(M) \Tr( N)\).
To see this, recall that by definition of the partial trace, 
\begin{equation}\label{eq:def_ptrace}
      \Tr_{S^c} [\Tr_{S}(M) \Tr_{S}(N)] = \Tr \big[ (\Tr_{S} (M) \ot \one_{S}) \,N \big] \,.
\end{equation}
As shown in Observation~\ref{eq:ptrace_channel}, the partial trace over parties in $S$ 
tensored by the identity on \(S\) can also be written as a quantum channel,
\begin{align}
  \Tr_{S}(M) \ot \one_{S} &= D^{-|S|}\sum_{\supp(E) \subseteq S} \!\!\!\! E M E^\dag \,. 
\end{align}
Thus \(B_j(M,N)\) decomposes \(\Tr(M) \Tr(N)\), 
\begin{align}\label{eq:Tr(M)Tr(N)}
    \Tr(M) \Tr(N) &= \Tr[ \Tr(M) \one\, N] \nn\\
		  &= D^{-n} \Tr ( \sum_{E}  E M E^\dag N) \, \nn \\
			  &= D^{-n} \sum_{j=0}^n \sum_{\wt(E)=j} \Tr (E M E^\dag N) \, \nn \\
			  &= D^{-n} \sum_{j=0}^n B_j(M,N) \,.
\end{align}
The insight gained from writing the partial trace in two different ways [c.f. Eqs.~\eqref{eq:ptrace_op1}, \eqref{eq:ptrace_op2}], 
and the decomposition 
of \(\Tr(MN)\) and \(\Tr(M) \Tr(N)\) in terms of the coefficients of the Shor-Laflamme enumerators [c.f. Eqs.~\eqref{eq:SL_enum1}, \eqref{eq:SL_enum2}]
will prove to be the essence of the MacWilliams identity, which we rederive in the following section.


\section{The quantum MacWilliams identity}\label{sec:MacWilliams}
In this section, we prove the quantum MacWilliams identity. 
It relates the two Shor-Laflamme enumerators \(A_{MN}(x,y)\) and \(B_{MN}(x,y)\) [Eqs.~\eqref{eq:SL_enum1}, \eqref{eq:SL_enum2}]
of arbitrary Hermitian operators \(M\) and \(N\).
\begin{theorem}[Quantum MacWilliams identity~\cite{Rains1998, Nebe2006}]
For any two Hermitian operators \(M\) and \(N\) acting on \(n\) systems
having \(D\) levels each, the following identity holds,
\begin{equation}
    A_{MN}(x,y) = B_{MN} \Big( \frac{x+(D^2-1)y}{D}, \frac{x-y}{D} \Big) \,.
\end{equation}
\end{theorem}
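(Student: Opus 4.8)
The plan is to prove the identity coefficient by coefficient, which reduces the polynomial relation to a set of scalar identities relating the $A_j$ to the $B_k$. Concretely, I would substitute the claimed arguments $x \mapsto (x + (D^2-1)y)/D$ and $y \mapsto (x-y)/D$ into $B_{MN}(x,y) = \sum_k B_k(M,N)\, x^{n-k} y^k$, expand each factor via the binomial theorem, and collect the coefficient of $x^{n-j} y^j$. Matching this against $A_j(M,N)$ would yield the \emph{transform identity}
\begin{equation}\label{eq:mw_scalar}
    A_j(M,N) = D^{-n} \sum_{k=0}^n K_j(k)\, B_k(M,N)\,,
\end{equation}
where $K_j(k)$ are Krawtchouk-type coefficients counting the signed overlaps of supports of size $j$ and $k$. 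The heart of the matter is therefore to establish Eq.~\eqref{eq:mw_scalar} directly from the definitions of $A_j$ and $B_k$.

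To do this, I would work entirely in the Bloch representation and exploit the twist structure built into the excerpt. The key object is how conjugation $E \mapsto F E F^\dag$ by a basis element $F \in \EE$ acts on another basis element $E$: since the single-party basis $\{e_j\}$ is (up to normalization) a unitary $1$-design, conjugating by all $F$ supported on a fixed set and summing reproduces a partial trace, exactly as in Observation~\ref{eq:ptrace_channel}. The plan is to start from $B_k(M,N) = \sum_{\wt(F)=k} \Tr(F M F^\dag N)$, insert the Bloch expansion of $M$ from Eq.~\eqref{eq:op_expansion}, and push the conjugation through. This converts $\Tr(F M F^\dag N)$ into a sum over $E$ of $\Tr(E M)\Tr(E^\dag N)$ weighted by a phase $\omega(E,F)$ coming from the commutation relation $F E F^\dag = \omega(E,F)\, E$ on the overlap of their supports. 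Summing over all $F$ of a fixed weight $k$ then produces, for each $E$ of weight $j$, a purely combinatorial factor depending only on $j$, $k$, and $D$ — precisely the $K_j(k)$ above.

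The main obstacle I expect is controlling this phase sum $\sum_{\wt(F)=k} \omega(E,F)$ cleanly for a general (possibly non-Hermitian, non-Weyl) orthonormal operator basis $\{e_j\}$, since the excerpt only assumes orthonormality and $e_0 = \one$, not a group structure. The tidy way around this is to \emph{not} commute operators at all, but instead to go through Rains' unitary enumerators $\AA'_S$ and $\BB'_S$ as intermediaries: the two computations of $\Tr(MN)$ and $\Tr(M)\Tr(N)$ in Eqs.~\eqref{eq:Tr(MN)} and \eqref{eq:Tr(M)Tr(N)}, applied not to the whole system but to every reduction, show that $A'_j = \sum_{\wt(E)\le j}\binom{n-\wt(E)}{\,j-\wt(E)\,}\,\cdots$ expresses the primed enumerators as fixed linear combinations of the $A_j$, and likewise the $B'_j$ in terms of the $B_j$. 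Thus I would first prove the two \emph{basis-independent} dictionary relations $A'_{MN}(x,y) = A_{MN}(x+y,\,y)$-type substitutions linking $(A_j,B_j)$ to $(A'_j, B'_j)$, each following from counting how a weight-$j$ support sits inside subsets $S$ of various sizes.

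Finally, I would invoke the elementary inclusion–exclusion relation $\AA'_S = \BB'_{S^c}$ noted after Eq.~\eqref{eq:unitary_enum}, which at the level of enumerators is the statement $A'_{MN}(x,y) = B'_{MN}(y,x)$ (reflecting $|S| \leftrightarrow |S^c|$). Composing this trivial reflection of the primed enumerators with the two dictionary substitutions that relate primed to unprimed enumerators would give a chain
\begin{equation}
    A_{MN} \;\longrightarrow\; A'_{MN} \;\xrightarrow{\;x \leftrightarrow y\;}\; B'_{MN} \;\longrightarrow\; B_{MN}\,,
\end{equation}
and tracking the affine changes of variable through this chain would produce exactly the substitution $\big((x+(D^2-1)y)/D,\,(x-y)/D\big)$. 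The only genuine computation is verifying that the two dictionary substitutions are mutually inverse affine maps whose composition with the reflection yields the claimed arguments; the factors of $D$ and $D^2-1$ arise from the $D$ distinct basis elements per site, of which $D^2-1$ are traceless. I would present the dictionary step carefully and relegate the bookkeeping of binomial coefficients to a short appendix-style verification.
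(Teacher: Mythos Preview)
Your final approach---routing through the unitary enumerators via the chain $A_{MN} \to A'_{MN} \xrightarrow{x\leftrightarrow y} B'_{MN} \to B_{MN}$, with the dictionary substitutions coming from the two partial-trace expressions and the reflection from $\AA'_S=\BB'_{S^c}$---is exactly the paper's proof; the paper obtains $A'_{MN}(x,y)=A_{MN}(x+y/D,\,y/D)$ and likewise for $B$, then composes. One small slip: there are $D^2$ (not $D$) basis elements per site, of which $D^2-1$ are traceless; this is where the $D^2-1$ in the final substitution comes from.
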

\begin{proof}
In order to prove this identity, 
one has to express the trace inner product of reductions in two different ways:
given the operator \(M\) expanded as in Eq.~\eqref{eq:op_expansion}, 
its reduction tensored by the identity reads [cf. Eq.~\eqref{eq:ptrace_op1}]
\begin{equation}
	\Tr_{S^c}(M) \ot \one_{S^c} = D^{|S^c|-n} \!\!\!\! \sum_{\supp(E) \subseteq S} \Tr(EM) E^\dag \,.
\end{equation}
Therefore,
\begin{align}\label{eq:ptrace_sum_A_warmup}
	& \Tr[\Tr_{S^c}(M) \ot \one_{S^c} \, N] \nn\\
	&= \Tr\Big( D^{|S^c|-2n} \!\!\!\! \sum_{\supp(E)\subseteq S} \!\!\!\! \Tr(E M) E^\dag \, \sum_{E'} \Tr(E'^\dag N) E' \Big) \nn\\
	&= D^{|S^c|-n} \!\!\!\! \sum_{\supp(E)\subseteq S} \!\!\!\! \Tr(E M)  \Tr(E^\dag N)\,.
\end{align}
Summing over all subsystems \(S\) of size $m$, one obtains
\begin{align}\label{eq:ptrace_sum_A}
	    & \sum_{|S|=m} \Tr[\Tr_{S^c}(M)\ot \one_{S^c} \,N] \nn \\
	    &= D^{|S^c|-n} \sum_{|S|=m} \,\, \sum_{\supp(E)\subseteq S} \!\!\!\! \Tr(E M)  \Tr(E^\dag N) \nn\\
	    &= D^{-m} \sum_{j=0}^m \binom{n}{m}\binom{m}{j} \binom{n}{j}^{-1} A_j(M,N) \nn\\
	    &= D^{-m} \sum_{j=0}^m \binom{n-j}{n-m} A_j(M,N) \,.
\end{align}
Above, the binomial factors account for multiple occurrences of terms having weight \(j\)
in the sum.
Note that Eq.~\eqref{eq:ptrace_sum_A} forms the coefficients of 
Rains' {\em unitary enumerator} [cf. \eqref{eq:unitary_enum}], defined as~\cite{Rains1998}
\begin{align}\label{eq:unitary_enum_coeffs}
      A'_m (M,N) &= \sum_{|S|=m} \AA_S'(M,N) \nn\\
		 &= \sum_{|S|=m} \Tr_S     [\Tr_{S^c}(M) \Tr_{S^c}(N)]  \,.
\end{align}
On the other hand, by expressing the partial trace as a quantum channel (see Obs.~\ref{eq:ptrace_channel})
and again summing over subsystems of size \(m\), we can write
\begin{align}\label{eq:ptrace_sum_B}
  &\sum_{|S|=m} \Tr[\Tr_{S} (M) \ot \one_{S} \,\, N] \nn\\
  &= \sum_{|S|=m} \Tr ( D^{-|S|} \!\!\!\!\sum_{ \supp(E) \subseteq S}  \!\!\!\! E M E^\dag N ) \nn\\
  &= D^{-m} \sum_{j=0}^m \binom{n}{m}\binom{m}{j} \binom{n}{j}^{-1} B_j(M,N) \nn\\
  &= D^{-m} \sum_{j=0}^m \binom{n-j}{n-m} B_j(M,N)\,.
\end{align}
Similar to above, Eq.~\eqref{eq:ptrace_sum_B} forms the coefficients of 
the unitary enumerator [cf. Eq.~\eqref{eq:unitary_enum}]
\begin{align}
    B'_m(M,N) &= \sum_{|S|=m} \BB'_{S}(M,N) \nn\\
	      &= \sum_{|S|=m} \Tr_{S^c} [\Tr_{S}(M)   \Tr_{S}(N)]\,.
\end{align}
Naturally, the corresponding unitary enumerator polynomials read
\begin{align}
  A'_{MN}(x,y) &= \sum_{j=0}^n A'_j(M,N) x^{n-j} y^j \,\\
  B'_{MN}(x,y) &= \sum_{j=0}^n B'_j(M,N) x^{n-j} y^j \,.
\end{align}

Using relations \eqref{eq:ptrace_sum_A} and \eqref{eq:ptrace_sum_B},
one can establish with the help of generating functions that
\begin{align}
	A'_{MN}(x,y) &= A_{MN}\Big(x+\frac{y}{D},\frac{y}{D}\Big)\,, \label{eq:unitary_enum_to_KL_enum_poly}\\
	B'_{MN}(x,y) &= B_{MN}\Big(x+\frac{y}{D},\frac{y}{D}\Big)\,. \label{eq:unitary_enum_to_KL_enum_poly_B}
\end{align}
This is somewhat tedious but straightforward (see Appendix~C).
It remains to use that \(\BB'_S(M,N) = \AA'_{S^c}(M,N)\), 
from which follows that \(B'_k(M,N) = A'_{n-k}(M,N)\), and
\begin{equation}
  A'_{MN}(x,y) = B'_{MN}(y,x)\,.
\end{equation}
Thus the quantum MacWilliams identity is established,
\begin{align}
	A_{MN}(x,y) &= A'_{MN}(x-y, Dy) 
	       = B'_{MN}(Dy, x-y) 			\nn\\
	       &= B_{MN}\Big(\frac{ x + (D^2-1)y}{D}, \frac{x-y}{D} \Big) \,.
\end{align}
This ends the proof.
\end{proof}

Because the relations~\eqref{eq:unitary_enum_to_KL_enum_poly} 
and \eqref{eq:unitary_enum_to_KL_enum_poly_B} are symmetric, one also has that 
\begin{equation}
	B_{MN}(x,y) 
	       = A_{MN}\Big(\frac{ x + (D^2-1)y}{D}, \frac{x-y}{D} \Big) \,.
\end{equation}
Thus the quantum MacWilliams transform is involutory.

Recall that for \(M=N= \dyad{\psi}\), one has \(A_j(\ket{\psi}) = B_j(\ket{\psi})\). 
Therefore the enumerator \(A_{\ket{\psi}}(x,y)\) must stay invariant under the transform
\begin{align}
      x \,\,\, &\longmapsto \,\,\,\frac{x + (D^2-1)y}{D} \,,\nn\\
      y \,\,\, &\longmapsto \,\,\,\quad\quad\frac{x-y}{D} \,.
\end{align}
In this case, a much simpler interpretation of the MacWilliams identity 
can be given: It ensures that the purities of complementary reductions, 
averaged over all complementary reductions of fixed sizes, are equal.

As shown above, the quantum MacWilliams identity is in essence a decomposition of the trace inner product of 
reductions of operators \(M\) and \(N\) in two different ways.
The motivation lies in the decomposition of \(\Tr(MN)\) and \(\Tr(M)\Tr(N)\),
using different ways to obtain the partial trace in the Bloch picture 
[cf. Eqs.~\eqref{eq:ptrace_op1} and Obs.~\ref{eq:ptrace_channel}].
Finally, note that the derivation of the identity did not require \(M,N\) to be positive semi-definite.
Therefore the quantum MacWilliams identity holds for all, including non-positive, 
pairs of Hermitian operators.


\section{The shadow enumerator in terms of the Shor-Laflamme enumerator}\label{sec:deriv_shadow_enum}
So far, we have introduced the Shor-Laflamme and the shadow enumerator.
Let us now see how to express one in terms of the other.
The strategy is the following: the shadow inequalities
are naturally expressed in terms of \(\AA'_S\) [cf. Eqs.~\eqref{eq:shadow_ineq} and \eqref{eq:shadow_coeffs}],
which we then write as a transformation of \(A_{MN}(x,y)\).

\begin{theorem}[Rains~\footnote{
  See Theorem $8$ in Ref.~\cite{Rains1999} and Theorem \(13.5.1.\) on p.~$383$ in Ref.~\cite{Nebe2006} for \(D=2\). 
  Also Sec. $V$ in Ref.~\cite{Rains1998} states this result, 
  but contains a sign error in the second argument of \(A_C\).}]
Given \(A_{MN}(x,y)\), the shadow enumerator is given by
\begin{equation}
	S_{MN}(x,y) = A_{MN}\left( \frac{ (D-1)x + (D+1)y}{D}, \frac{y-x}{D} \right)\,.
\end{equation}
\end{theorem}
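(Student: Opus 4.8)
The plan is to express the shadow coefficients $S_j(M,N)$ directly in terms of Rains' unitary enumerator coefficients $A'_m(M,N)$, and then translate this relation into a substitution on the polynomial level, finally composing with the already-established identity $A'_{MN}(x,y)=A_{MN}(x+\frac{y}{D},\frac{y}{D})$ [Eq.~\eqref{eq:unitary_enum_to_KL_enum_poly}]. First I would start from the definition
\begin{equation}
	S_j(M,N) = \sum_{ |T|= j} \sum_{S} (-1)^{|S \cap T^c|} \AA'_S(M,N)\,,
\end{equation}
and carry out the inner sum over all subsets $T$ of fixed size $j$ for a \emph{fixed} $S$. The key combinatorial observation is that $|S\cap T^c| = |S| - |S\cap T|$, so the sign $(-1)^{|S\cap T^c|}$ depends only on $|S|$ and the overlap $|S\cap T|$. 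Grouping the $T$'s by the value $t=|S\cap T|$ turns the inner sum into a weighted binomial count: for $|S|=m$, the number of size-$j$ subsets $T$ meeting $S$ in exactly $t$ elements is $\binom{m}{t}\binom{n-m}{j-t}$, and each contributes sign $(-1)^{m-t}$. Thus the combinatorial kernel relating $S_j$ to $A'_m$ is $\sum_t (-1)^{m-t}\binom{m}{t}\binom{n-m}{j-t}$, which is (up to sign) a Krawtchouk-type coefficient.

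Next I would package this kernel into a generating function. The cleanest route is to introduce the homogeneous polynomial $S_{MN}(x,y)=\sum_j S_j x^{n-j}y^j$ and recognize the kernel above as the coefficient extracted from a product of binomial expansions: $(-1)^m\sum_t \binom{m}{t}(-1)^{t}\binom{n-m}{j-t}$ is the $y^j$-coefficient of $(x-y)^{?}$-type factors. Concretely, the substitution implementing $\sum_{|T|=j}(-1)^{|S\cap T^c|}$ on a monomial tracking $|S|=m$ should send the $A'$-enumerator variables according to $x\mapsto x+y$, $y\mapsto -x+y$ (or a scalar multiple thereof), because summing $(-1)^{|S\cap T^c|}$ over $T$ factorizes over the elements inside $S$ (each giving a factor carrying a $-1$) and outside $S$ (each giving a plain factor). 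I would verify the exact linear substitution by checking it reproduces the kernel $\sum_t(-1)^{m-t}\binom{m}{t}\binom{n-m}{j-t}$ on the monomial $x^{n-m}y^m$, which pins down the coefficients unambiguously.

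Having found that $S_{MN}(x,y)=A'_{MN}(\text{linear in }x,y)$, I would then substitute the known relation \eqref{eq:unitary_enum_to_KL_enum_poly} expressing $A'_{MN}$ through $A_{MN}$, and compose the two linear changes of variables. Composing two affine substitutions is routine, and the result should collapse to
\begin{equation}
	S_{MN}(x,y) = A_{MN}\!\left( \frac{(D-1)x + (D+1)y}{D}, \frac{y-x}{D} \right)\,,
\end{equation}
as claimed. The main obstacle I anticipate is bookkeeping the factorization of the $T$-sum correctly and fixing the precise linear map from $S$ to $A'$: it is easy to drop a sign or swap the roles of $x$ and $y$, especially since the definition uses $T^c$ rather than $T$ in the exponent. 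I would guard against this by testing the composed formula on a trivial case (for instance $M=N=\one$, or the smallest $n$) to confirm both the signs and the normalization by $D$ come out right before declaring the identity proved.
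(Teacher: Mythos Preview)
Your proposal is correct and follows essentially the same route as the paper: reduce the shadow coefficients to a Krawtchouk-type kernel acting on the unitary enumerator coefficients $A'_m$, recognize this as the polynomial substitution $S_{MN}(x,y)=A'_{MN}(x+y,\,y-x)$, and then compose with Eq.~\eqref{eq:unitary_enum_to_KL_enum_poly}. Your factorization heuristic (each element inside $S$ contributes a $(y-x)$ factor, each outside a $(y+x)$ factor) is precisely what the paper obtains via the Krawtchouk generating function in Appendix~E, so the two arguments coincide once the bookkeeping is done.
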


\begin{proof}
Recall from Eq.~\eqref{eq:shadow_coeffs}, that for Hermitian operators 
\(M,N\geq 0\), the coefficients of the shadow enumerator are
\begin{equation}
	S_j(M,N) = \sum_{ |T|= j} 
	\sum_{S} (-1)^{|S \cap T^c|} \AA'_S(M,N) \,.
\end{equation}

As a first step, let us understand what combinatorial factor a given \(\AA'_S(M,N)\) 
receives from the sum over the subsets \(T \subseteq \{1 \dots n\}\) of size $j$,
or subsets $T^c$ of size \(m = n-j\) respectively.
For a fixed subsystem \(S\) of size \(k\), we can evaluate the partial sum
\begin{equation}\label{eq:minus_sum_over_partition_overlap}
    f(m=|T^c|, k=|S|;n) = \sum_{|T^c|=m} (-1)^{|S \cap T^c|} \,.
\end{equation}
By considering what possible subsets $T^c$ of size \(m\) have a 
constant overlap of size \(\alpha\) with $S$, yielding a sign \((-1)^\alpha\),
we obtain the expression
\begin{equation} 
    f(m, k;n) =  \sum_\alpha \binom{n-k}{m-\alpha} \binom{k}{\alpha} (-1)^{\alpha} \eqqcolon K_m(k;n)\,,
\end{equation}
where \(K_m(k;n)\) is the so-called Krawtchouk polynomial (see Appendix~D).
Above, $\binom{k}{\alpha}$ accounts for the different combinatorial possibilities 
of elements $T^c$ having overlap \(\alpha\) with $S$.
Necessarily, $T^c$ must then have a part of size $m-\alpha$ lying outside of $S$; 
there are $\binom{n-k}{m-\alpha}$ ways to obtain this. 
This is illustrated in Fig.~\ref{fig:Krawtchouk_combinatorics}.

\begin{figure}[!t]
\includegraphics[height=9em]{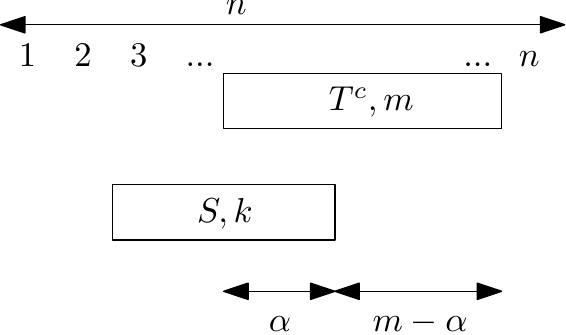}
 \caption{Overlap between $S$ and subsets $T$ of size m.
	  The term $\binom{k}{\alpha}$ accounts for the different combinatorial possibilities 
	  of elements $T^c$ having overlap \(\alpha\) with $S$.
	  Necessarily, $T^c$ must then have a part of size $m-\alpha$ lying outside of $S$; 
	  there are $\binom{n-k}{m-\alpha}$ ways to obtain this.}
\label{fig:Krawtchouk_combinatorics}
\end{figure}

Therefore, one obtains
\begin{equation}\label{eq:shadow_in_unitary_enum_Krawtchouk}
	S_j(M,N) = \sum_{k=0}^n K_{n-j}(k;n) A'_k(M,N) \,.
\end{equation}

Again, one can write this relation in a more compact form in 
terms of the unitary enumerator (see Appendix~E),
\begin{equation}\label{eq:shadow_in_unitary_enum}
	S_{M N}(x,y) = A_{MN}'(x+y,y-x)\,.
\end{equation}
To obtain the shadow enumerator in terms of the Shor-Laflamme enumerator, 
we take advantage of Eq.~\eqref{eq:unitary_enum_to_KL_enum_poly}. Then 
\begin{align}
	&S_{MN}(x,y) = A_{MN}'(x+y,y-x) \nn\\
	&= A_{MN}\left( \frac{ (D-1)x + (D+1)y}{D}, \frac{y-x}{D} \right)\,.
\end{align}
This ends the proof.
\end{proof}
Thus, given the Shor-Laflamme enumerator, one can obtain the shadow enumerator 
simply by a transform. If any of its coefficients are negative, a corresponding QECC cannot exist.

Given the parameters \( ((n,K,d))_D\) of a hypothetical QECC, one can formulate a linear program
to find possible enumerators which satisfy all the relations derived, 
namely Eq.~\eqref{eq:KB0_A0} and Eq.~\eqref{eq:KBj_Aj}, as well as the quantum MacWilliams identity (Thm.~2) 
and the quantum shadow identity (Thm.~3) (see Appendix~F)~\cite{Calderbank1998, Nebe2006}.
If no valid weights \(A_j\) can be found, a code with the proposed parameters cannot exist.
This provides a method to prove the non-existence of certain hypothetical states and QECC; 
on the other hand, the existence of a valid enumerator however does not imply the existence of a corresponding code.

An overview on the relations between the enumerators is given in Appendix~G.


\section{New bounds on absolutely maximally entangled states}\label{sec:AME}
\begin{figure*}[!th]
  \includegraphics[width=0.95\linewidth]{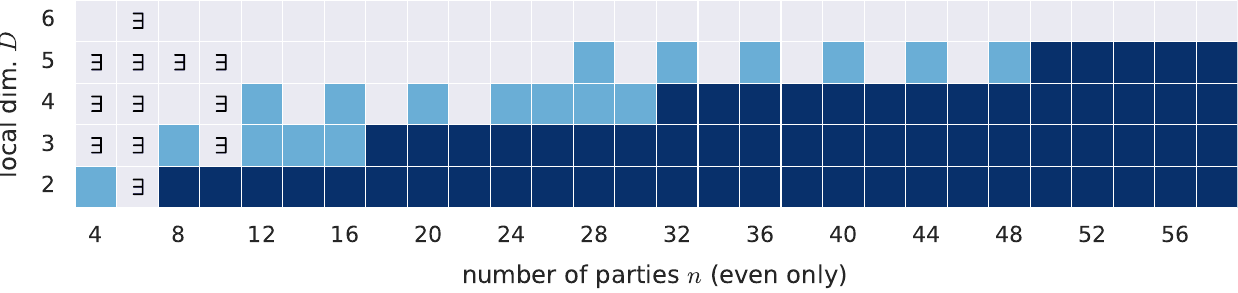}
\end{figure*}
\begin{figure*}[!th]
  \includegraphics[width=0.95\linewidth]{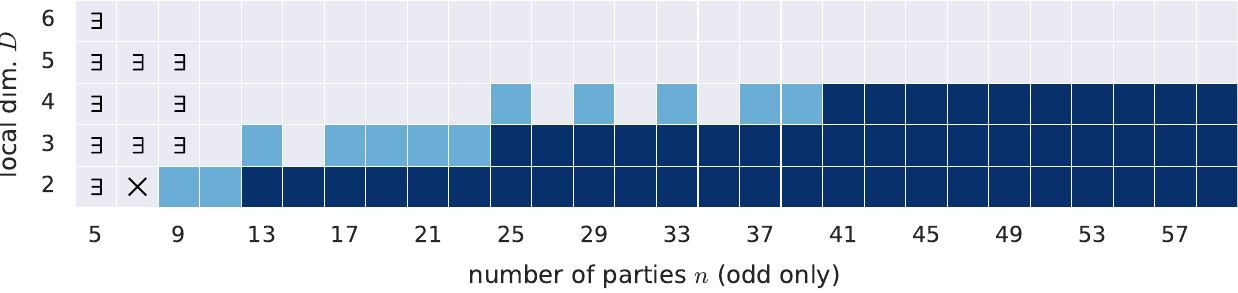}
  \caption{
	In dark blue, AME states are marked which are already excluded by the bound from Scott [Eq.~\eqref{eq:Scott_bound}];
	in light blue, those AME states are shown for which the negativity of the shadow enumerator coefficients \(S_j(\ket{\phi_{n,D}})\)
	[Eq.~\eqref{eq:AME_shadow_coeff}] gives stronger bounds. 
	The non-existence of AME states having parameters \(n=4,9,11\) with \(D=2\) was already known~\cite{Higuchi2000, Scott2004}.
	The AME state with \(n=7\) and \(D=2\) (marked with a cross) is neither excluded by the Scott bound nor by the shadow enumerator, 
	but by Ref.~\cite{Huber2017}.
	The symbol \(\exists\) marks states which are known to exist, constructions can be found in 
	Refs.~\cite{Helwig_graph,Goyeneche2015,Rains1999_nonbinary,Danielsen2012,
	Goyeneche2017,Goyeneche2014, Grassl2004,Danielsen2009,Grassl2015}. 
	In particular, AME states always exist for \(n \leq D\) if \(D\) is a prime-power~\cite{Grassl2004}.
	}
  \label{fig:AME_plot}
\end{figure*}
In this last section, let us return to the question of the existence of 
absolutely maximally entangled (AME) states.
Scott showed in Ref.~\cite{Scott2004} that a necessary requirement for an AME
state of \(n\) parties having \(D\) levels each to exist,
is
\begin{equation}\label{eq:Scott_bound}
  n \leq 
  \begin{cases}
  	2(D^2-1)  &\quad n\,\, \rm{even,} \\
  	2D(D+1)-1 &\quad n\,\, \rm{odd}.
  \end{cases}
\end{equation}
We explain now shortly how this bound was obtained by requiring the positivity 
of the Shor-Laflamme enumerator \(A_{\nhf+2}\).
Recall that complementary reductions of pure states share the same spectrum
and therefore also the same purity.
Thus if \(\ket{\phi_{n,D}}\) is a putative AME state 
of \(n\) parties having \(D\) levels each, then the coefficients of the unitary enumerator as 
defined in Eq.~\eqref{eq:unitary_enum_coeffs} are given by
\begin{equation}\label{AME_purity2}
	A'_k(\ket{\phi_{n,D}}) = \binom{n}{k} D^{- \min(k, n-k)}\,.
\end{equation}
Considering the unitary enumerator coefficient \(A'_{\nhf+2}\), 
only the terms \(A_0=1\), \(A_{\nhf+1}\), and \(A_{\nhf+2}\) contribute,
with appropriate combinatorial prefactors. From Eq.~\eqref{eq:ptrace_sum_A} 
[or from the transform in Eq.~\eqref{eq:unitary_enum_to_KL_enum_poly}], one obtains
\begin{align}\label{AME_scott_deriv1}
 A'_{\nhf+2} &= D^{-(\nhf+2)}\Big[\binom{n}{\nhf+2} A_0 \nn\\
 & + \binom{n-(\nhf+1)}{n-(\nhf+2)} A_{\nhf+1} + A_{\nhf+2} \Big]\,.
\end{align}
The term \(A_{\nhf+1}\) in above equation is fixed by the knowledge of \(A'_{\nhf+1}\),
\begin{equation}\label{AME_scott_deriv2}
 A'_{\nhf+1} = D^{-(\nhf+1)}\Big[\binom{n}{\nhf+1} A_0 + A_{\nhf+1} \Big] \,.
 \end{equation}
Combining Eqs.~\eqref{AME_purity2}, \eqref{AME_scott_deriv1}, and \eqref{AME_scott_deriv2}, 
solving for \(A_{\nhf+2}\), and requiring its non-negativity yields then the bound~of~Eq.~\eqref{eq:Scott_bound}. 
One may wonder if stronger bounds can be obtained by treating the non-negativity of \(A_j\) for \(j>\nhf+2\) in a similar manner. 
However, this does not seem to be the case.

Let us now see what the additional constraints from the shadow enumerator yield.
Having knowledge of all the unitary enumerator coefficients [Eq.~\eqref{AME_purity2}],
all that is left is to evaluate Eq.~\eqref{eq:shadow_in_unitary_enum_Krawtchouk} 
[or Eq.~\eqref{eq:shadow_in_unitary_enum} respectively],
which relates the shadow enumerator to the unitary enumerator.
If any coefficient \(S_j(\ket{\phi_{n,D}})\) happens to be negative, 
a AME state on \(n\) parties having \(D\) levels each cannot exist. 
We should mention that one could also evaluate the shadow inequalities [Eq.~\eqref{eq:shadow_ineq}] 
for a suitable choice of \(T \subseteq \{1\dots n\}\) directly - the shadow coefficients simply represent 
symmetrized forms of these inequalities.
To give an example, consider a putative AME state on four qubits, whose non-existence was proven by Ref.~\cite{Higuchi2000}.
Choosing \(T=\{1,2,3, 4\}\) leads to 
\begin{equation} 
 S_0(\ket{\phi_{4,2}})  = A'_0 - A'_1 + A'_2 - A'_3 + A'_4 
 = -\frac{1}{2} \,\ngeq 0\,,
\end{equation}
in contradiction to the requirement that all \(S_j\) be non-negative.
For general AME states, the coefficient \(S_0\) reads
\begin{align}
 S_0(\ket{\phi_{n,D}}) &= \sum_{k=0}^n (-1)^k \binom{n}{k} D^{-\min(k, n-k)}\,.
\end{align}
The complete set of coefficients is given by [c.f. Eq.~\eqref{eq:shadow_in_unitary_enum_Krawtchouk}]
\begin{align}\label{eq:AME_shadow_coeff}
 S_j(\ket{\phi_{n,D}}) &= \sum_{k=0}^n K_{n-j}(k;n) \binom{n}{k} D^{-\min(k, n-k)} \,,
\end{align}
where \(K_m(k;n)\) is the so-called Krawtchouk polynomial (see Appendix~D).

In Fig.~\ref{fig:AME_plot}, the parameters of hypothetical AME states are shown: 
In dark blue, AME states are marked which are already excluded by the bound from Scott [Eq.~\eqref{eq:Scott_bound}];
in light blue, those AME states are shown for which the negativity of the shadow enumerator coefficients \(S_j(\ket{\phi_{n,D}})\)
[Eq.~\eqref{eq:AME_shadow_coeff}] gives stronger bounds. 
For Fig.~\ref{fig:AME_plot}, all shadow coefficients of hypothetical AME states with local dimension \(D \leq 9\) 
and \(n\) not violating the Scott bound have been evaluated. For \(3\leq D \leq 5\), we found $27$
instances where the shadow enumerator poses a stronger constraint than the bound from Scott.

The non-existence of AME states having parameters \(n=4,9,11\) with \(D=2\) was already known~\cite{Higuchi2000, Scott2004}.
The AME state with \(n=7\) and \(D=2\) (marked with a cross) is neither excluded by the Scott bound nor by the shadow enumerator, 
but by Ref.~\cite{Huber2017}.
The symbol \(\exists\) marks states which are known to exist, constructions can be found in 
Refs.~\cite{Helwig_graph,Goyeneche2015,Rains1999_nonbinary,Danielsen2012,
Goyeneche2017,Goyeneche2014, Grassl2004,Danielsen2009,Grassl2015}. 
In particular, AME states always exist for \(n \leq D\) if \(D\) is a prime-power~\cite{Grassl2004}.

We conclude, that additionally to the known non-existence bounds, 
three-level AME states of \(n = 8,12,13,14,16,17,19,21,23\),
four-level AME states of \(n = 12,16,20,24,25,26,28,29,30,33,37,39\), and 
five-level AME states of \(n = 28,32,36,40,44,48\) parties do not exist.

\section{Mixed-dimensional AME states}
\label{sec:AME_mix}
One might wonder about the existence of absolutely maximally entangled states also in systems that have {\em mixed} local dimensions: 
does a pure state exist, such that every bipartition shows maximal entanglement~\footnote{
For mixed-dimensional states, this differs in definition to the one given by Ref.~\cite{Goyeneche2016}, 
which demands that every reduction of size \(\nhf\) be maximally mixed. 
The choice of definition depends on the desired feature in applications.}? 
For this to be true, every subsystem whose dimension is not larger than that of its complement must be maximally mixed.

Let us give examples of four-partite systems that consist of qubits and qutrits.
As already shown in Ref.~\cite{Higuchi2000}, AME states on four qubits do not exist. On the other hand, an AME state
on four qutrits does exist and is given by a stabilizer state~\cite{Helwig_graph}.
How about other configurations?
Using the shadow inequality, it can be seen that AME states in systems having the local dimensions 
$2\times2\times2\times3$ and $2\times2\times3\times3$ are not allowed.
The last remaining case, a system with dimensions $2\times3\times3\times3$, 
allows for such a state, which we could
find using an iterative semi-definite program (see below) with analytical post-processing.
The state we found reads
 \begin{align}\label{eq:2333ame}
 \ket{\phi_{2333}} = 
&-\a 	\ket{0011} 
-\b  	\ket{0012} 
+\b 	\ket{0021}  
+\a 	\ket{0022} \nn\\
&-\b  \ket{0101}
+\a 	\ket{0102}
+\b 	\ket{0110}
+\a 	\ket{0120} \nn\\
&-\a  	\ket{0201}
+\b \ket{0202}
-\a   	\ket{0210}
-\b   	\ket{0220} \nn\\
&-\b  	\ket{1011}
+\a 	\ket{1012}
-\a  	\ket{1021}
+\b 	\ket{1022} \nn\\
&+\a 	\ket{1101} 
+\b 	\ket{1102} 
-\a  	\ket{1110} 
+\b 	\ket{1120} \nn\\
&- \b 	\ket{1201} 
- \a 	\ket{1202}
- \b 	\ket{1210}
+\a 	\ket{1220} \nn\,.
 \end{align}
 Two possible sets of coefficients are given by
 \begin{align}
  \a &= \frac{1}{6} \sqrt{\frac{3}{2} \pm \frac{\sqrt{65}}{6}}\,, &
  \b &= \frac{1}{54\a} = \frac{1}{6} \sqrt{\frac{3}{2} \mp \frac{\sqrt{65}}{6}}\,,
 \end{align}
 which are (up to a global sign) the two solutions to the constraints
 \begin{align}
    12(\a^2 + \b^2) &= 1\,, 	& 54 \,\a\b &= 1\,.
 \end{align}
 Both solutions are equivalent under local unitaries; the gate \(U~=~\exp(i \varphi \sigma_y) \ot \one^{\ot 3}\)
with
  \(\varphi = 2\arctan(\frac{1-54\a^2}{1+54\a^2}) = -2\arctan \sqrt{5/13}\)
 maps the first to the second solution.
 
We found this state with an iterative semi-definite program that works in the following way~\cite{thesis_FHuber}:
 \begin{itemize}
  
  \item[1)] Choose a random initial state \(\ket{\psi^{(0)}}\).
  
  \item[2)] Solve the following semidefinite program,
    \begin{align}
      \underset{\varrho}{\text{maximize}} 	\quad& \bra{ \psi^{(i)}} \r   \ket{ \psi^{(i)}}   \nonumber\\
      \text{subject to}	 \quad& \r_{AB} = \r_{AC} = \r_{AD} = \one/6 \nn\\
					 & \r_B = \r_C = \r_D = \one/3 \nn\\
					 &\tr[\varrho] = 1, \quad \varrho = \varrho^{\dag}, \quad \varrho \geq 0 \,. \nn
    \end{align}
    (Note that \(\r_{AB} = \one/6\) implies \(\r_A = \one/2\), and latter constraint does not need to be stated separately.)
  
  \item[3)] Set \(\ket{\psi^{(i+1)}}\) equal to the eigenvector corresponding to the maximal eigenvalue of \(\varrho\). 
  
  \item[4)] Repeat steps $2)$ \& $3)$ until convergence.
 \end{itemize}

Thus after each iteration, the state is projected onto the eigenvector corresponding to its largest eigenvalue.
Note that the reductions onto two qutrits need to have rank $6$ in a $9$-dimensional space, with all non-vanishing eigenvalues being equal. 
While this requirement cannot be stated as a semidefinite constraint, the maximal mixedness of the complementary reductions can.
In that way, the above constraints guarantee maximal entanglement across every bipartition. 
Above iterative program may also be used for other configurations and for related problems, 
such as for those presented in Ref.~\cite{Reuvers1711.07943}.

\section{Conclusion}\label{sec:conclusion}
Using the quantum weight enumerator machinery originally derived by Shor, Laflamme and Rains, 
we obtained bounds on the existence of absolutely maximally entangled states, 
excluding 27 open cases of dimensions larger than two.
For this, we used the so-called shadow inequalities, which constrain the possible correlations 
arising from quantum states. Additionally, we provided a proof of the quantum MacWilliams identity 
in the Bloch representation, clarifying its physical interpretation. 
We furthermore raised the question of mixed-dimensional AME states, 
of which it may be possible to form interesting mixed-dimensional QECC~\footnote{
As an example, the state \(\ket{\phi_{2333}}\) can be regarded as a \(((4,1,3))_{2^1 3^3}\) code,
where the lower indices denote the local dimensions.
A partial trace over the last particle yields a \(((3,3,2))_{2^1 3^2}\) code,
which can be used to correct one error if its position is known. This follows from Thm.~19 in Ref.~\cite{Rains1998}.}.

For future work, it would be interesting to see what the generalized shadow inequalities
involving higher-order invariants~\cite{Rains2000} imply for the distribution of correlations 
in QECC and multipartite quantum states.


\begin{acknowledgments}
We thank 
Jens Eisert, 
Markus Grassl,
Christian Majenz, 
Marco Piani,
Ingo Roth,
and
Nikolai Wyderka 
for fruitful discussions. FH thanks Markus Grassl 
for introducing him to the theory of QECC.
We thank the anonymous Referees for suggesting to consider AME states of mixed local dimensions.

This work was supported by 
the Swiss National Science Foundation (Doc.Mobility 165024),
the FQXi Fund (Silicon Valley Community Foundation),
the DFG (EL710/2-1), 
the ERC (Consolidator Grant 683107/TempoQ),
the Basque Country Government (IT986-16), 
MINECO/FEDER/UE (FIS2015-67161-P), 
and the UPV/EHU program UFI 11/55.
\end{acknowledgments}

\section{Appendix}

\subsection{Proof of observation~1}
Let us proof Observation~\ref{eq:ptrace_channel}.
The partial trace can also be written as
\\
\\ \noindent {\bf Observation I.}
The partial trace over subsystem \(S\) tensored by the identity on \(S\)
can also be written as a channel,
\begin{equation}
      \Tr_{S}(M) \ot \one_{S} = D^{-|S|}\sum_{\supp(E) \subseteq S} \!\!\!\! E M E^\dag \,. \\
  \end{equation}

\begin{proof}
 Consider a bipartite system with Hilbert space \( \mc{H} = \C^D \ot \C^D\) with a local orthonormal operator basis \(\{e_j\}\) on \(\C^D\).
 Define the \(\SWAP\) operator as
 \begin{equation}
  \SWAP = \sum_{j,k=0}^{D-1}\ketbra{jk}{kj}\,.
 \end{equation}
 It acts on pure states as \(\SWAP (\ket{\psi} \ot \ket{\phi}) = \ket{\phi} \ot \ket{\psi}\).

  It can also be expressed in terms of any orthonormal basis \(\{e_j\}\)
  as \cite{Wolf2013}
  \begin{equation}
    \SWAP = \frac{1}{D} \sum_{j=0}^{D^2-1} e_j^\dag \ot e_j \,.
  \end{equation}
 Therefore we can express \(\one \ot N \) as 
 \begin{align}
  \one \ot N  
  &=\SWAP \cdot (N \ot \one) \cdot \SWAP \nn\\
  &= (\frac{1}{D}\sum_{ j=0}^{D^2-1} e_j \ot e_j^\dag ) (N \ot \one )( \frac{1}{D}\sum_{ k=0 }^{D^2-1} e_k^\dag \ot e_k ) \nn\\
  &= D^{-2} \sum_{j,k=0}^{D^2-1} (e_j N e_k^\dag) \ot (e_j^\dag e_k) \,.
 \end{align}
 Tracing over the second party gives
  \begin{equation}
    \Tr(N)\one = \frac{1}{D} \sum_{j=0}^{D^2-1} e_j N e_j^\dag \,.
  \end{equation}
 The claim follows from the linearity of the tensor product. This ends the proof. 
\end{proof}
Note that the proof is independent 
of the local orthonormal operator basis \(\{e_j\}\) chosen.


\subsection{Shor-Laflamme enumerators and the distance of QECC}
Here, we prove that for any QECC, the coefficients \(A_j = A_j(\QQ)\) and \(B_j = B_j(\QQ)\) are non-negative
and fulfill
\begin{align}
	K B_0 &= A_0 = K^2      \label{eq:KB0_A0_app}\,,\\
	K B_j &\geq A_j  	\label{eq:KBj_Aj_app}\,.
\end{align}
Furthermore, any projector \(\QQ\) of rank \(K\)
is a QECC of distance \(d\) if and only if \(K B_j(\QQ) = A_j(\QQ)\) for all \(j<d\) .

Let us start with the first claim. The equation \(KB_0 = A_0 = K^2\) follows by direct computation.
Let us show that for a QECC having the parameters \(((n,K,d))_D\),
the coefficients of the Shor-Laflamme enumerator fulfill [cf. Eq.~\eqref{eq:KBj_Aj}]
\(
 A_j(\QQ) \leq K B_j(\QQ)\,,
\)
where equality holds for \(j < d\).
Recall that 
\begin{align}
	A_j(\QQ) &= \sum_{\wt(E) = j} \Tr(E \QQ) \Tr(E^\dag \QQ) \,,\\
	B_j(\QQ) &= \sum_{\wt(E) = j} \Tr(E \QQ E^\dag \QQ)      \,.
\end{align}
Let us check the inequality for each term appearing in the sum, namely for those of the form
\begin{equation}\label{eq:AjBj_term_by_term}
  \Tr( E \QQ ) \Tr (E^\dag \QQ) \leq K \Tr( E \QQ E^\dag \QQ)\,,
\end{equation}
where \(E\) is a specific error under consideration. 
For later convenience, let us choose the error-basis \(\EE\) to be Hermitian, e.g. formed by tensor products of 
the generalized Gell-Mann matrices~\cite{Bertlmann2008}.
Using \(\QQ = \sum_{i=1}^{K} \dyad{i_\QQ}\), write
\begin{align}
  \Tr( E \QQ ) \Tr (E^\dag \QQ) &= (\sum_{i=1}^{K}   \bra{i_\QQ} E \ket{i_\QQ}) (\sum_{j=1}^{K} \bra{j_\QQ} E^\dag \ket{j_\QQ}) \,,\nn\\
  \Tr( E \QQ E^\dag \QQ) 	&=  \sum_{i,j=1}^{K} \bra{i_\QQ} E \dyad{j_\QQ} E^\dag \ket{i_\QQ} \,.
\end{align}
For the case of \(\wt(E) < d\), 
let us recall the definition of a QECC [Eq.~\eqref{eq:def_qecc}],
\begin{equation}
    \bra{i_\QQ} E \ket{j_\QQ} = \delta_{ij} C(E), \quad \text{ if } \wt(E) < d\,.
\end{equation}
This leads for \(j<d\) to
\begin{align}
  \Tr( E \QQ ) \Tr (E^\dag \QQ) &=  K^2 C(E) C^*(E) \,,\nn\\
  \Tr( E \QQ E^\dag \QQ) 	&=  K C(E) C^*(E) \,.
\end{align}
Therefore, \(A_j(\QQ) = K B_j(\QQ)\) for all \(j<d\).

If on the other hand \(\wt(E) \geq d\), 
let us define the matrix \(\CC\) having the entries \(\CC_{ij} = \bra{i_\QQ} E \ket{j_\QQ}\). 
Note that \(\CC\) is a Hermitian matrix of size \(K \times K\). Then
\begin{align}
  \Tr( E \QQ ) \Tr (E^\dag \QQ) &=  [\Tr(\CC)]^2 \,, \nn\\
  \Tr( E \QQ E^\dag \QQ) 	&=  \Tr(\CC^2) \,.
\end{align}
Consider the diagonalization of \(\CC\). By Jensen's inequality, its eigenvalues must fulfill
\begin{equation}\label{eq:AjBj_convexity}
  \Big(\sum_{i=1}^{K} \lambda_i \Big)^2 \leq K \sum_{i=1}^{K} \lambda_i^2  \,,
\end{equation}
from which the inequality \( A_j(\QQ) \leq K B_j(\QQ)\) follows.

Let us now show that a projector \(\QQ\) of rank \(K\) is a QECC of distance \(d\)
if and only if \(A_j(\QQ) = K B_j(\QQ)\) for all \(j < d\). This can be seen in the following way:

``\(\Rightarrow\)'':
Use the definition of QECC, Eq.~\eqref{eq:def_qecc}.

``\(\Leftarrow\)'': 
Note that in order to obtain \(A_j(\QQ) = K B_j(\QQ)\),
there must be equality in Eq.~\eqref{eq:AjBj_term_by_term} for all \(E\) with \(\wt(E)=j\).
Thus, also equality in Eq.~\eqref{eq:AjBj_convexity} is required. 
However, this is only possible if all eigenvalues \(\lambda_i\) of \(\CC\) are equal.
Then, \(\CC\) is diagonal in any basis, and we can write
\begin{equation}
 \bra{i_\QQ} E \ket{j_\QQ} = \delta_{ij} \lambda(\CC) \,.
\end{equation}
Because above equation must hold for all errors \(E\) of weight less than \(d\),
we obtain Eq.~\eqref{eq:def_qecc} defining a quantum error correcting code:
\begin{equation}
  \bra{i_\QQ} E \ket{j_\QQ} = \delta_{ij} C(E) \,,
\end{equation}
for all \(E\) with \(\wt(E) < d\).
This ends the proof.


\subsection{Relating the unitary enumerators to the Shor-Laflamme enumerators}\label{sec:uni_enum_to_KLF_enum}
Let us relate the unitary enumerators to 
the Shor-Laflamme enumerators by means of a polynomial transform.
\begin{align}
	& A'_{MN}(x,y) = \sum_{m=0}^n A'_m(M,N) x^{n-m} y^m \nn\\
		&= \sum_{m=0}^n \Big[ \sum_{j=0}^m \binom{n-j}{n-m} A_j(M,N) D^{-m}\Big] x^{n-m} y^m  \nn\\
		&= \sum_{j=0}^n \sum_{m=0}^n \binom{n-j}{n-m} A_j(M,N) x^{n-m} (y/D)^m  \allowdisplaybreaks\nn\\
		&= \sum_{j=0}^n \sum_{m=j}^n \binom{n-j}{n-m} A_j(M,N) x^{n-m} (y/D)^m (y/D)^{-j} (y/D)^{j} \nn\\
		&= \sum_{j=0}^n \sum_{m=0}^{n-j} \binom{n-j}{n-j-m} A_j(M,N) x^{n-j-m} (y/D)^m (y/D)^{j}  \nn\\
		&= \sum_{j=0}^n A_j(M,N) (x+y/D)^{n-j} (y/D)^{j}  \nn\\
		&= A_{MN} \Big(x+\frac{y}{D}, \frac{y}{D}\Big)\,.
\end{align}
In an analoguous fashion (replace \(A'_m\) by \(B'_m\), and
\(A_j\) by \(B_j\)), one obtains
\begin{equation}
	B'_{MN}(x,y) = B_{MN} \Big(x+\frac{y}{D}, \frac{y}{D} \Big)\,.
\end{equation}


\subsection{Krawtchouk polynomials}

The Krawtchouk (also Kravchuk) polynomials can be seen as 
a generalization of the binomial coefficients.
They are, for \(n,k \in \N_0\) and \(n-k\geq 0\), 
defined as~\footnote{See p. \(42\) in Ref.~\cite{Nebe2006} or Chpt.~$5$, \(\mathsection 7\) in Ref.~\cite{MacWilliams1981}.}
\begin{align}
	K_m(k;n) = \sum_{\alpha} (-1)^\alpha \binom{n-k}{m-\alpha} \binom{k}{\alpha} \,.
\end{align}
If \(m<0\), \(K_m(k;n)=0\).
The generating function of the Krawtchouk polynomial is
\begin{equation}\label{eq:Krawtchouk_genfun}
	\sum_{m} K_m(k;n) z^m = (1+z)^{n-k} (1-z)^k \,.
\end{equation}
In this work, we need a closely related expression,
\begin{equation}\label{eq:Krawtchouk_genfun_ext}
	\sum_{m} K_m(k;n) x^{n-m} y^m =  (x + y)^{n-k} (x-y)^k\,.
\end{equation}
That above equation holds, can be seen in the following way.
\begin{align}
	&(x+ y)^{n-k} (x-y)^k \nn\\
			    &= \sum_\alpha \binom{n-k}{\alpha} x^{n-k-\alpha} (y)^\alpha \,\,
			       \sum_\beta  \binom{k}{\beta}    x^{k-  \beta}  y^\beta (-1)^\beta \nn \allowdisplaybreaks\\
			    &= \sum_\alpha \sum_\beta \binom{n-k}{\alpha} \binom{k}{\beta}  
			              x^{n- (\alpha + \beta)}  y^{(\alpha+\beta)} (-1)^\beta \nn\allowdisplaybreaks\\
			    &= \sum_{m}    \Big[ \sum_\beta \binom{n-k}{m-\beta}\binom{k}{\beta}   (-1)^\beta \Big]
			              x^{n-m}  y^{m} \nn\\
			    &= \sum_{m} K_m(k;n) x^{n-m} y^m \,,
\end{align}
where we set $m=\alpha+\beta$ in the third line. 
Of course, setting $x=1$ recovers Eq.~\eqref{eq:Krawtchouk_genfun}.

We will also need the Krawtchouk-like polynomial 
\begin{align}\label{eq:Krawtchouk_like}
  &\tilde{K}_m(k;n,\gamma,\delta) = \nn\\ &\sum_{\alpha} (-1)^\alpha \binom{n-k}{m-\alpha} \binom{k}{\alpha} \gamma^{[(n-k) - (m-\a)]} \delta^{(m-\alpha)} \,,
\end{align}
which are the coefficients of
\begin{align}
&(\gamma x+ \delta y)^{n-k} (x-y)^k \nn\\
			    &= \sum_\alpha \binom{n-k}{\alpha} (\gamma x)^{n-k-\alpha} (\delta y)^\alpha
			       \sum_\beta  \binom{k}{\beta}    x^{k-  \beta}  y^\beta (-1)^\beta \nn\\
			    &= \sum_\alpha \sum_\beta \binom{n-k}{\alpha} \binom{k}{\beta}  
			              x^{n- (\alpha + \beta)}  y^{(\alpha+\beta)} \gamma^{n-k-\alpha}  \delta^\alpha (-1)^\beta \allowdisplaybreaks \nn\\
			    &= \sum_{m}    \Big[ \sum_\beta \binom{n-k}{m-\beta}\binom{k}{\beta}   (-1)^\beta \gamma^{(n-k)-(m-\b)} \delta^{m-\beta} \Big]
			              x^{n-m}  y^{m} \nn\\
			    &= \sum_{m} \tilde{K}_m(k;n, \gamma, \delta) x^{n-m} y^m \,,
\end{align} 
where we set $m=\alpha+\beta$ in the second last line. 

\subsection{The shadow enumerator in terms of the unitary enumerator}\label{sect:Krawtchouk_polynomials}
Let us now transform the shadow enumerator into the unitary enumerator.
\begin{align}
	S_{M N}(x,y) &= \sum_{m=0}^n  S_m x^{n-m} y^m 				\nn\\
		   &= \sum_{m=0}^n  \sum_{k=0}^n K_{n-m}(k;n) A'_k(M,N) \, x^{n-m} y^m 	  \allowdisplaybreaks\nn\\
		   &= \sum_{k=0}^n  A'_k(M,N) \Big[ \sum_{m=0}^n  K_{n-m}(k;n) \, x^{n-m} y^m \Big]  \allowdisplaybreaks\nn\\
		   &= \sum_{k=0}^n  A'_k(M,N) \Big[ \sum_{m'=0}^n  K_{m'}(k;n) \, x^{m'} y^{n-m'} \Big]	  \allowdisplaybreaks\nn\\
		   &= \sum_{k=0}^n  A'_k(M,N) (y+x)^{n-k} (y-x)^{k}		\nn\\
		   &= A_{MN}'(x+y,y-x)\,.
\end{align}

Above, the second last equality follows from Eq.~\eqref{eq:Krawtchouk_genfun_ext}.

\subsection{Linear programming bound}\label{sect:LP}

For completeness, we provide the linear programming bound that was first established by Refs.~\cite{Shor1997, Rains1999}.
Given the parameters \( ((n,K,d))_D\) of a hypothetical QECC, one can formulate a linear program
to find possible enumerators which satisfy all the relations derived, 
namely Eq.~\eqref{eq:KB0_A0} and Eq.~\eqref{eq:KBj_Aj}, as well as the quantum MacWilliams identity (Thm.~2) 
and the quantum shadow identity (Thm.~3).
If no valid weights \(A_j\) can be found, a code with the proposed parameters cannot exist.
This provides a method to prove the non-existence of certain hypothetical states and QECC; on the other hand, 
the existence of a valid enumerator however does not imply the existence of a corresponding code.

\begin{theorem}[LP bound for general QECC~\footnote{
For \(D=2\), this was stated in Thm.~21 in Ref.~\cite{Calderbank1998}, in Thm.~10 and~12 in Ref.~\cite{Rains1999}, 
and on p.~$383$ in Ref.~\cite{Nebe2006}.}]
 If a \(((n,K,d))_D\) exists, then there is a solution to the following set of linear equations and inequalities:
 \begin{align}\label{eq:LP_for_qecc}
    K B_0 	&= A_0 = K^2 \nn\\
    K B_i 	&= A_i \geq 0	\quad (i <     d)	\nn\\
    K B_i 	&\geq A_i 	\geq 0 \quad (i \geq d) 	\nn\\
    B_i 	&= {D^{-n}} \sum_{0\leq k \leq n} \tilde{K}_i(k;n, 1, D^2-1) A_k \nn\\
    S_i 	&= {D^{-n}} \sum_{0\leq k \leq n} (-1)^k \tilde{K}_i(k;n, D-1, D+1) A_k \nn\\
    S_i 	&\geq 0 \,,
 \end{align}
 where Krawtchouk-like polynomial \(\tilde{K}_i(k;n,\gamma, \delta)\) is given by Eq.~\eqref{eq:Krawtchouk_like}.
 For pure codes, the second constraint above is strengthened to the equality
 \begin{equation}
    K B_i 	= A_i = 0  \quad (i <     d)\,.
 \end{equation}
 For qubit stabilizer codes, one additionally has that either one of the two below conditions is satisfied
 \begin{equation}
  \sum_{i \text{ even}} A_i = 
    \begin{cases}
  2^{n-\log_2 (K)-1}    	\quad\text{(type \(I\) codes)}\\
  2^{n-\log_2 (K)} 		\quad \quad\text{(type \(II\) codes)}\,.
  \end{cases}
 \end{equation}
For self-dual codes (where $K=1$), additionally \(S_{n-j}=0\) holds for all odd \(j\).
\end{theorem}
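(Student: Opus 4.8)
The plan is to recognize that Theorem~4 is essentially a repackaging of results already established in the preceding sections, and to prove it line by line, treating the pure, self-dual, and stabilizer refinements separately at the end. First I would dispatch the three relations among the $A_i$ and $B_i$: the identities $KB_0 = A_0 = K^2$, the equalities $KB_i = A_i \geq 0$ for $i<d$, and the inequalities $KB_i \geq A_i \geq 0$ for $i \geq d$ are exactly what is proved in Appendix~B for any $((n,K,d))_D$ code, so these lines need no new argument. The non-negativity $S_i \geq 0$ is immediate from the generalized shadow inequality~\eqref{eq:shadow_ineq}, since the shadow coefficients are symmetrized forms of that inequality.

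The two remaining lines are the coefficient versions of the quantum MacWilliams identity (Thm.~2) and the shadow identity (Thm.~3). For the $B_i$ line I would insert $A_{MN}(x,y) = \sum_k A_k\, x^{n-k}y^k$ into the involutory MacWilliams form $B_{MN}(x,y) = A_{MN}\big((x+(D^2-1)y)/D,\,(x-y)/D\big)$, pull out the factor $D^{-n}$, and expand $(x+(D^2-1)y)^{n-k}(x-y)^k$ using the Krawtchouk-like generating identity of Appendix~D (the expansion following Eq.~\eqref{eq:Krawtchouk_like}) with $\gamma=1$, $\delta=D^2-1$; reading off the coefficient of $x^{n-i}y^i$ yields $B_i = D^{-n}\sum_k \tilde{K}_i(k;n,1,D^2-1)A_k$. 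For the $S_i$ line I would do the same with Thm.~3, writing $(y-x)^k = (-1)^k(x-y)^k$ to produce the sign $(-1)^k$ and applying \eqref{eq:Krawtchouk_like} with $\gamma=D-1$, $\delta=D+1$. The pure-code strengthening $KB_i = A_i = 0$ for $0 < i < d$ then follows at once from the fact, recorded in Sec.~\ref{sec:weight_enum}, that pure codes have $A_j = 0$ for $1 \leq j < d$.

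For the self-dual case ($K=1$) I would exploit that $A_j = B_j$ makes $A_{MN}(x,y)$ \emph{invariant} under the MacWilliams transform, i.e. $A_{MN}(u,v) = A_{MN}\big((u+(D^2-1)v)/D,\,(u-v)/D\big)$. Substituting the shadow transform of Thm.~3 into this invariance and simplifying the composed substitution, one finds $S_{MN}(x,y) = S_{MN}(-x,y)$; comparing coefficients of $x^{n-i}y^i$ forces $S_i = 0$ whenever $n-i$ is odd, which is precisely $S_{n-j}=0$ for all odd $j$. This is a short but slightly delicate computation, hinging entirely on the invariance of $A$.

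The main obstacle is the qubit stabilizer condition, which does not follow from the general operator machinery and requires importing the stabilizer formalism. Here one uses that, for a stabilizer code, $A_j$ counts (up to normalization) the weight-$j$ elements of the stabilizer group $\mathcal{S}$, so that $\sum_{i\text{ even}} A_i = \tfrac{1}{2}\big[\sum_i A_i + \sum_i (-1)^i A_i\big]$, with $\sum_i A_i$ equal to the group order $|\mathcal{S}| = 2^{\,n-\log_2 K}$. The key point is to show that the signed sum $\sum_i (-1)^i A_i$ equals either $|\mathcal{S}|$ or $0$, according to whether every stabilizer element has even weight (Type~II) or not (Type~I). I would obtain this by a MacWilliams-type argument on the symplectic self-orthogonal code underlying $\mathcal{S}$, in analogy with the classical identity that $\sum_{c\in C}(-1)^{\wt(c)}$ vanishes unless $C$ is an even code; note that weight parity is \emph{not} a homomorphism on the Pauli group, so this signed sum must be evaluated through the code duality rather than a character argument. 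Handling the normalization consistently and importing just enough stabilizer structure is the delicate part, whereas every other line of the theorem is a direct consequence of results already in hand.
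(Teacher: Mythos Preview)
Your proposal is correct and mirrors the paper's own treatment: the paper does not give Theorem~4 a separate proof but presents it as a compilation of the constraints already established---Eqs.~\eqref{eq:KB0_A0}--\eqref{eq:KBj_Aj} from Appendix~B, the MacWilliams identity (Thm.~2), the shadow identity (Thm.~3), and the shadow inequality~\eqref{eq:shadow_ineq}---with the stabilizer and self-dual refinements simply imported from the cited references. Your explicit extraction of the coefficient formulas for $B_i$ and $S_i$ via the Krawtchouk-like generating function of Appendix~D, and your derivation of $S_{MN}(x,y)=S_{MN}(-x,y)$ for $K=1$ from the MacWilliams invariance of $A_{MN}$, supply arguments the paper leaves entirely implicit.

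One small correction on the stabilizer clause: your caution that weight parity is not a homomorphism on the full Pauli group is well taken, but on any \emph{abelian} subgroup---in particular on the stabilizer $\mathcal{S}$---weight parity \emph{is} additive mod~2. Two Pauli operators commute precisely when they anticommute on an even number of sites, and a site-by-site count shows that $\wt(PQ)-\wt(P)-\wt(Q)$ has the same parity as the number of sites where $P_j$ and $Q_j$ are distinct non-identities, which is exactly the number of anticommuting sites. Hence $g\mapsto(-1)^{\wt(g)}$ is a genuine character of $\mathcal{S}$, and the Type~I/II dichotomy follows immediately from the standard index-1-or-2 kernel argument; the detour through symplectic code duality is not needed.
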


\onecolumngrid
\subsection{An overview on the identities between the weight enumerator polynomials.}
In below table, we summarize the known relations between the weight enumerator polynomials.
\vspace{1.2em}
\begin{table*}[h!]
\begin{tabular}{ l  l  l }
\multirow{3}{*}{Shor-Laflamme and unitary enum.:\quad\quad}
&$A'_{MN}(x,y) = A_{MN}\big(x+\frac{y}{D},\frac{y}{D}\big)$ & $A_{MN}(x,y) = A'_{MN}(x-y,Dy )$\\
&$B'_{MN}(x,y) = B_{MN}\big(x+\frac{y}{D},\frac{y}{D}\big)$ & $B_{MN}(x,y) = B'_{MN}(x-y,Dy )$\\
&$A'_{MN}(x,y) = B'_{MN}(y,x)$ &\\
\noalign{\vskip 2mm}    
\hline
\noalign{\vskip 2mm}
\multirow{1}{*}{MacWilliams identity.:\quad\quad}
      & $A_{MN}(x,y) = B_{MN} \Big( \frac{x+(D^2-1)y}{D}, \frac{x-y}{D} \Big)$ \quad\quad 
      & $B_{MN}(x,y) = A_{MN} \Big( \frac{x+(D^2-1)y}{D}, \frac{x-y}{D} \Big)$ \\
\noalign{\vskip 2mm}
\hline
\noalign{\vskip 2mm}    
  \multirow{2}{*}{Shadow identity.:\quad\quad}
      &$S_{MN}(x,y) = A_{MN}\left( \frac{ (D-1)x + (D+1)y}{D}, \frac{y-x}{D} \right) $ &\\
      &$S_{M N}(x,y) = A_{MN}'(x+y,y-x)$ &
\end{tabular}
\caption{An overview on the identities between the weight enumerator polynomials.} 
\label{table:identities}
\end{table*}
\twocolumngrid

\end{document}